\newtheorem{lemma}{Lemma}[section]
\newtheorem{theorem}{Theorem}[section]
\newtheorem{deff}{Definition}[section]
\begin{document}
\title{Asymptotically Good Codes over\\ non-Abelian Groups}
\author{\IEEEauthorblockN{Aria G. Sahebi and S. Sandeep Pradhan\\ \thanks{This work was supported by NSF grants CCF-0915619 and CCF-1116021.}}
\IEEEauthorblockA{Department of Electrical Engineering and Computer Science,\\
University of Michigan, Ann Arbor, MI 48109, USA.\\
Email: \tt\small ariaghs@umich.edu, pradhanv@umich.edu}}

\markboth{Draft}
{Sahebi \MakeLowercase{\textit{et al.}}: Asymptotically Good Codes over non-Abelian Groups}
\maketitle

\begin{abstract}
In this paper, we show that good structured codes over non-Abelian groups do exist. Specifically, we construct codes over the smallest non-Abelian group $\mathds{D}_6$ and show that the performance of these codes is superior to the performance of Abelian group codes of the same alphabet size. This promises the possibility of using non-Abelian codes for multi-terminal settings where the structure of the code can be exploited to gain performance.
\end{abstract}

\begin{IEEEkeywords}
group codes, structured codes, achievable rate, non-Abelian groups
\end{IEEEkeywords}

\IEEEpeerreviewmaketitle

\section{Introduction}
\IEEEPARstart{A}{lgebraically} structured codes are an important class of codes in coding/information theory and communications and evaluating the information-theoretic performance limits of such codes has been an area of significance \cite{ahlswede_alg_codes,sandeep_discus,forney_dynamics,fagnani_abelian,dinesh_dsc,dobrushin_group}. It is well-known that linear codes achieve the symmetric capacity of $q$-ary channels where $q$ is a prime \cite{elias}\cite{dobrushin_group}. Linear codes can also be used to compress a binary source losslessly down to its entropy \cite{korner_marton}. Optimality of linear codes for certain communication problems motivates the study of algebraic-structured codes including Abelian and non-Abelian group codes.

In \cite{korner_marton} it has been shown that for some multi-terminal communication settings, the average asymptotic performance of the ensemble of structured codes can be better than that of random codes. In recent years, such gains have been shown for a wide class of multi-termianl problems \cite{phiosof_zamir, dinesh_dsc, nazer_gastpar}. Thus, characterization of the information theoretic performance limits of these codes became important. However, the structure of the code restricts the encoder to abide by certain algebraic rules. This causes the performance of such codes to be inferior to random codes in some communication settings. Linear codes are highly structured and for some problems in information theory they cannot be optimal. Moreover, these codes can only be defined over alphabets of size a power of a prime.

Group codes are a generalization of linear codes which are algebraically structured and can be defined for any alphabet. These codes can outperform unstructured codes in certain communication problems \cite{dinesh_dsc}. Group codes were first studied by Slepian \cite{slepian_group} for the Gaussian channel. In \cite{ahlswede_group}, the capacity of group codes for certain classes of channels has been computed. Further results on the capacity of group codes were established in \cite{ahlswede_alg_codes,ahlswede_alg_codes2,abelianp2p_ieee}.\\
It has been shown in \cite{abelianp2p_ieee} that Abelian group codes do not achieve the capacity of arbitrary channels. It has also been conjectured by several authors that non-Abelian group codes are inferior to Abelian group codes \cite{Forney_hamming_distance_group_codes} \cite{Interlando_group_codes_are_bad} \cite{Massey_nonabelian_conformant}. This motivates a loosening of the structure of the code yet further.

In this work, we focus on the point-to-point channel coding problem. We define a class of structured codes which includes the class of group codes and has less structure compared to group codes. We evaluate the performance of such codes over the smallest non-Abelian group $\mathds{D}_6$ and show that these codes have a strictly better performance compared to Abelian group codes. We use a combination of algebraic and information-theoretic tools for this task. This observation broadens our view to structured codes for possible use in multi-terminal settings.\\
The paper is organized as follows: In Section \ref{section:preliminaries}, we introduce
our notation and develop the required background. In Section \ref{section:ensemble} we define the ensemble of codes. We analyze the performance of theses codes in Section \ref{section:error_analysis} where we solve an optimization problem and make several counting arguments. We compare the performance of the constructed codes to the performance of Abelian group codes in Section \ref{section:comparison} and we conclude in Section \ref{section:Conclusion}.

\section{Preliminaries}\label{section:preliminaries}
\subsubsection{Groups}
A group is a set $G$ equipped with a binary operation ``$\cdot$'' to form an algebraic structure. The group operation ``$\cdot$'' must satisfy the group axioms (closure, associativity, identity and invertibility). A group is called \emph{Abelian} if its operation is commutative and \emph{non-Abelian} otherwise.

\subsubsection{Group Codes}
Given a group $G$, a group code $\mathds{C}$ over $G$ with block length $n$ is any subgroup of $G^n$ \cite{forney_dynamics,algebra_bloch}. A shifted group code over $G$, $\mathds{C}+v$ is a translation of a group code $\mathds{C}$ by a fixed vector $v\in G^n$.

\subsubsection{Source and Channel Models}
We consider discrete memoryless and stationary channels used without feedback. We associate two finite sets $\mathcal{X}$ and $\mathcal{Y}$ with the channel as the channel input and output alphabets. These channels can be characterized by a conditional probability law $W(y|x)$ for $x\in \mathcal{X}$ and $y\in \mathcal{Y}$. The set $\mathcal{X}$ admits the structure of a finite Abelian group $G$ of the same size. The channel is specified by $(G,\mathcal{Y},W)$. Assuming a perfect source coding block applied prior to the channel coding, the source of information generates messages over the set $\{1,2,\ldots,M\}$ uniformly.

\subsubsection{Achievability and Capacity}
A transmission system with parameters $(n,M,\tau)$ for reliable communication over a given channel $(G,\mathcal{Y},W)$ consists of an encoding mapping and a decoding mapping
\begin{align*}
&e:\{1,2,\ldots,M\}\rightarrow G^n\\
&f:\mathcal{Y}^n\rightarrow\{1,2,\ldots,M\}
\end{align*}
such that for all $m=1,2,\ldots,M$,
\begin{align*}
\frac{1}{M}\sum_{m=1}^{M}W^n\left(f(Y^n)\ne
m|X^n=e(m)\right)\le \tau
\end{align*}
Given a channel $(G,\mathcal{Y},W)$, the rate $R$ is said to be achievable if for all $\epsilon>0$ and for all sufficiently large $n$, there exists a transmission system for reliable communication with parameters $(n,M,\tau)$ such that
\begin{align*}
\frac{1}{n}\log M \ge R-\epsilon,\mbox{   }\tau\le \epsilon
\end{align*}
The capacity of the channel is defined as the supremum of the set of all achievable rates.
\subsubsection{Typicality}
Consider two random variables $X$
and $Y$ with joint probability density function $p_{X,Y}(x,y)$
over $\mathcal{X}\times\mathcal{Y}$. Let $n$ be an integer and
$\epsilon$ be a positive real number. The sequence pair $(x^n,y^n)$
belonging to $\mathcal{X}^n\times \mathcal{Y}^n$ is said to be
jointly $\epsilon$-typical with respect to $p_{X,Y}(x,y)$ if
\begin{align*}
\nonumber \forall a\in\mathcal{X},\mbox{   }\forall b\in\mathcal{Y}:
\left|\frac{1}{n}N\left(a,b|x^n,y^n\right)-p_{X,Y}(a,b)\right|\le
\frac{\epsilon}{|\mathcal{X}||\mathcal{Y}|}
\end{align*}
and none of the pairs $(a,b)$ with $p_{X,Y}(a,b)=0$ occurs in
$(x^n,y^n)$. Here, $N(a,b|x^n,y^n)$ counts the number of
occurrences of the pair $(a,b)$ in the sequence pair $(x^n,y^n)$.
We denote the set of all jointly $\epsilon$-typical sequence
pairs in $\mathcal{X}^n\times \mathcal{Y}^n$ by
$A_\epsilon^n(X,Y)$.\\
Given a sequence $x^n\in \mathcal{X}^n$, the set of conditionally
$\epsilon$-typical sequences $A_\epsilon^n(Y|x^n)$ is defined as
\begin{eqnarray}
A_\epsilon^n(Y|x^n)=\left\{y^n\in \mathcal{Y}^n\left| (x^n,y^n)\in
A_\epsilon^n(X,Y)\right.\right\}
\end{eqnarray}
\subsubsection{Dihedral Groups}
A dihedral group of order $2p$ is the group of symmetries of a regular $p$-gon, including reflections and rotations and any combination of these operations. A dihedral group can be represented as a quotient of a free group as follows:
\begin{align*}
D_{2p}=\langle x,y|x^p=1,y^2=1,xyxy=1\rangle
\end{align*}
Dihedral groups are among the simplest non-Abelian groups.
\subsubsection{Notation}In our notation, $O(\epsilon)$ is any function of $\epsilon$ such that $\lim_{\epsilon\rightarrow 0}O(\epsilon)=0$ and for a set $A$, $|A|$ denotes its size (cardinality).
\section{A Class of Structured Codes}\label{section:ensemble}
Based on Forney's \emph{analysis} of group codes \cite{forney_dynamics}, we \emph{construct} a class of structured codes which we call \emph{pseudo-group} codes.

In this note, we consider a class of pseudo-linear codes over $\mathds{D}_6$ and find a lower bound on the capacity of such codes.\\
Let $A_1,\cdots,A_n$ be subgroups of $G$. We propose a method to construct a code whose output at time $k$ forms the subgroup $A_k$ of $G$ and whose input is the subgroup $F_k$ of $A_k$:

\begin{itemize}
\item Choose the \emph{controllability index} of the code $\nu\in \mathbb{Z}^+$.
\item For each $k$, choose a normal series $F_{k,0}\lhd F_{k,1}\lhd \ldots \lhd F_{k,\nu}=F_k$ in $F_k$.
\item For each $k$, define \emph{granules} $\Gamma_{[k,k]}=F_{k,0}$ and for all $1\leq j\leq \nu$, define granules $\Gamma_{[k,k+j]}=F_{k,j}/F_{k,j-1}$.
\item For each $k$ and for all $0\leq j\leq \nu$ find $C_{[k,k+j]}$ such that $C_{[k,k+j]}\leq A_k^{[k,k+j]}$ and $C_{[k,k+j]}/C_{[k,k+j-1]}\cdot C_{[k+1,k+j]}\cong \Gamma_{[k,k+j]}$ where $A_k^{[k,k+j]}$ is the projection of $A_k^\infty$ onto the interval $[k,k+j]$. In other words, $C_{[k,k+j]}$ is isomorphically an extension of $C_{[k,k+j-1]}\cdot C_{[k+1,k+j]}$ by $\Gamma_{[k,k+j]}$ inside $A_k^{[k,k+j]}$. Note that at least one such extension exists (direct product) and note that the projections $C_{[k,k+j-1]}$ are defined recursively.
\item Choose the mappings $T_k:F_k\rightarrow C_{[k,k+\nu]}$ such that T is a transversal function of the subgroup $C_{[k,k+\nu-1]}\cdot C_{[k+1,k+\nu]}$ of $C_{[k,k+\nu]}$.
\item Given an information digit sequence $\ldots u_0u_1u_2\ldots$, $u_i\in \mathbb{Z}$, the output sequence is $\ldots T_0(u_0)T_1(u_1)T_2(u_2)\ldots$.
\end{itemize}

\begin{deff}
Any code constructed using the above construction algorithm is called a pseudo-linear code over $G$.
\end{deff}

\begin{theorem}
Any free group code is a pseudo-linear code.
\end{theorem}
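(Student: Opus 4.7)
The plan is to exhibit, for a given free group code, an explicit choice of the parameters (controllability index, normal series in each $F_k$, granules, subgroups $C_{[k,k+j]}$, and transversal maps $T_k$) such that the construction algorithm reproduces the given code. I would start from Forney's canonical trellis/granule decomposition of group codes, in which each code admits a system of input groups $F_k$, output groups $A_k$, and a family of spans $C_{[k,k+j]}$ coming from projections of the code onto time windows. The hypothesis that the code is \emph{free} is exactly what will make each extension step in the algorithm split as a direct product, which is the distinguished case the construction already lists as always available.

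First I would identify the controllability index $\nu$ with the code's memory (the maximum length of a nontrivial span in Forney's decomposition), and define $F_k$ as the input group at time $k$ and $A_k$ as the time-$k$ output group. Next I would take the natural normal series in $F_k$ induced by the chain of subgroups whose quotients are the time-$k$ granules $\Gamma_{[k,k+j]}$ that appear in Forney's theory; because the code is free, each quotient $F_{k,j}/F_{k,j-1}$ is well-defined and the chain is a normal series by construction. I would then set $C_{[k,k+j]}$ to be the projection of the codewords generated by inputs in $F_{k,j}$ (with zero inputs outside of time $k$) onto the window $[k,k+j]$. Freeness ensures that $C_{[k,k+j]}$ decomposes as a direct (not merely extension) product of $C_{[k,k+j-1]}\cdot C_{[k+1,k+j]}$ and a copy of $\Gamma_{[k,k+j]}$, so the required extension property in step four of the construction holds automatically.

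Having chosen the $C_{[k,k+j]}$, the transversal map $T_k$ is defined by lifting each element of $F_k$ to the canonical representative in $C_{[k,k+\nu]}$ coming from the direct-product splitting; by construction this is a transversal of $C_{[k,k+\nu-1]}\cdot C_{[k+1,k+\nu]}$ in $C_{[k,k+\nu]}$. Finally, I would verify that concatenating the outputs $T_k(u_k)$ as $k$ varies recovers precisely the codewords of the original free group code, by comparing generators on each time window and invoking the uniqueness portion of Forney's granule decomposition.

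The main obstacle I anticipate is the bookkeeping between Forney's intrinsic description of the free group code (via its granules and spans) and the recursive, inductively built objects $C_{[k,k+j]}$ in the construction: one must show that the subgroups obtained recursively from the granules actually coincide with the projections of the original code, and that the direct-product splittings chosen locally at each $(k,j)$ are mutually consistent across overlapping windows. The freeness hypothesis is exactly what makes this coherence automatic, so once that is articulated precisely, the remaining verifications reduce to checking definitions.
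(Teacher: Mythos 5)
Your proposal is essentially the paper's argument in expanded form: the paper's entire proof is a one-line citation to Forney's result that any free group code can be reconstructed by this algorithm via its granule/span decomposition, which is exactly the decomposition you spell out. The one caveat is your claim that freeness forces each extension to split as a direct product --- what is actually needed (and what Forney's granule theorem supplies) is only that the projections $C_{[k,k+j]}$ of the code satisfy the stated extension property --- but this does not change the substance of the argument.
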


\begin{proof}
It has been shown in \cite{forney_dynamics} that any free group code can be reconstructed using the above construction algorithm. This completes the proof.
\end{proof}
%

For Abelian groups, the definition of pseudo-group codes coincides with the definition of group codes but for non-Abelian groups this class is larger than the class of group codes; i.e. it includes all group codes as well as some non-group codes. In this paper, we use these codes for the smallest non-Abelian group $\mathds{D}_6$ and show that this loosening of the structure results in a better performance. The generalization of the analysis to dihedral groups $\mathds{D}_{2p}$ where $p$ is a prime is relatively straight forward.\\
The group $\mathds{D}_6$ with presentation $\mathds{D}_{6}=\langle x,y|x^3=1,y^2=1,xyxy=1\rangle$ can be characterized by a set $\{1,x,x^2,y,xy,x^2y\}$ with the following table of operations:
\begin{center}
\begin{tabular}{|c||c|c|c|c|c|c|}\hline
$\cdot$&$1$&$x$&$x^2$&$y$&$xy$&$x^2y$\\\hline\hline
$1$&$1$&$x$&$x^2$&$y$&$xy$&$x^2y$\\\hline
$x$&$x$&$x^2$&$1$&$xy$&$x^2y$&$y$\\\hline
$x^2$&$x^2$&$1$&$x$&$x^2y$&$y$&$xy$\\\hline
$y$&$y$&$x^2y$&$xy$&$1$&$x^2$&$x$\\\hline
$xy$&$xy$&$y$&$x^2y$&$x$&$1$&$x^2$\\\hline
$x^2y$&$x^2y$&$xy$&$y$&$x^2$&$x$&$1$\\\hline
\end{tabular}
\end{center}

Note that for two elements $g,h$ in $\mathds{D}_6$, $g\cdot h$ may not be equal to $h\cdot g$. In the following, we construct the ensemble of codes over $\mathds{D}_6$. Consider codes with input group $F_k=\mathds{D}_6$ and output group $A_k=\mathds{D}_6$ for all $k$. Let $\nu$ be a nonnegative integer number. We choose the following normal series:\\
\begin{align*}
F_{k,0}=1\lhd F_{k,1}=1\lhd \ldots F_{k,\nu-1}=1\lhd F_{k,\nu}=\mathds{D}_6=F_k
\end{align*}
(In the most general case we can have a chain isomorphic to:
\begin{align*}
1\lhd 1\lhd \ldots \lhd 1\lhd \mathds{Z}_2\lhd \ldots \lhd \mathds{Z}_2\lhd \mathds{D}_6\lhd \ldots \lhd\mathbb{D}_6
\end{align*}
but it can be shown that this chain can do no better.)\\
Since the input and output groups are not changing over time, the granules are time independent. Define $\Gamma_j=\Gamma_{[k,k+j]}$. Then,
\begin{align*}
&\Gamma_0=F_{k,0}=C_{[k,k]}=1\\
&\Gamma_1=F_{k,1}/F_{k,0}=1\\
&\vdots\\
&\Gamma_{\nu-1}=F_{k,\nu-1}/F_{k,\nu-2}=1\\
&\Gamma_{\nu}=F_{k,\nu}/F_{k,\nu-1}=\mathds{D}_6\\
\end{align*}
Next step is to find the projection of the code over finite intervals. Note that in this case we have $C_{[k,k+j]}\cong C_{[0,j]}$.
\begin{align*}
&C_{[0,0]}\cong\Gamma_0=1, C_{[0,0]}\le \mathds{D}_6^1\Rightarrow C_{[0,0]}=1\\
&C_{[1,1]}\cong C_{[0,0]}=1\\
&\Rightarrow C_{[1,1]}\cdot C_{[0,0]}=1\\
&C_{[0,1]}/C_{[0,0]}\cdot C_{[1,1]}\cong \Gamma_1=1, C_{[0,1]}\le \mathds{D}_6^2\Rightarrow C_{[1,1]}=1\\
&\vdots\\
&C_{[1,\nu-1]}\cong C_{[0,\nu-2]}=1\\
&\Rightarrow C_{[0,\nu-2]}\cdot C_{[1,\nu-1]}=1\\
&C_{[0,\nu-1]}/C_{[0,\nu-2]}\cdot C_{[1,\nu-1]}\cong \Gamma_{\nu-1}=1, C_{[0,\nu-1]}\le \mathbb{D}_6^\nu\\
&\Rightarrow C_{[0,\nu-1]}=1\\
&C_{[1,\nu]}\cong C_{[0,\nu-1]}=1\\
&\Rightarrow C_{[0,\nu-1]}\cdot C_{[1,\nu]}=1\\
&C_{[0,\nu]}/C_{[0,\nu-1]}\cdot C_{[1,\nu]}\cong \Gamma_{\nu}=\mathbb{D}_6, C_{[0,\nu]}\le \mathbb{D}_6^{\nu+1}\\
\end{align*}
Therefore,
\begin{align}\label{eqn:subcode}
C_{[0,\nu]}=\langle \textbf{g}^0,\textbf{h}^0|(\textbf{g}^0)^3=1,(\textbf{h}^0)^2=1,\textbf{g}^0\textbf{h}^0\textbf{g}^0\textbf{h}^0=1\rangle
\end{align}

where $\textbf{g}^0,\textbf{h}^0\in \mathds{D}_6^{\nu+1}$.\\
\begin{align*}
&(\textbf{g}^0)^3=1\Rightarrow \textbf{g}^0\in \{1,x,x^2\}^{\nu+1}\\
&(\textbf{h}^0)^2=1\Rightarrow \textbf{h}^0\in \{1,y,xy,x^2y\}^{\nu+1}\\
\end{align*}

It can be shown that if we take $\textbf{g}^0=g_{00}g_{01}\ldots g_{0\nu}$ and $\textbf{h}^0=h_{00}h_{01}\ldots g_{0\nu}$ where $g_{0i}$ and $h_{0i}$ are chosen jointly according to Table 1 then the third condition in Equation \ref{eqn:subcode} will also be satisfied. Therefore for any such $\textbf{g}^0$ and $\textbf{h}^0$, the group $C_{[0,\nu]}=\langle \textbf{g}^0,\textbf{h}^0|(\textbf{g}^0)^3=1,(\textbf{h}^0)^2=1,\textbf{g}^0\textbf{h}^0\textbf{g}^0\textbf{h}^0=1\rangle$ is a subgroup of $\mathds{D}_6$. As tough our input group is only restricted to be a subgroup of $\mathds{D}_6$ and not necessarily $\mathds{D}_6$ itself.\\
Similarly, We get $C_{[k,k+\nu]}=\langle \textbf{g}^k,\textbf{h}^k|(\textbf{g}^k)^3=1,(\textbf{h}^k)^2=1,\textbf{g}^k\textbf{h}^k\textbf{g}^k\textbf{h}^k=1\rangle$ where $\textbf{g}^k=g_{k0}g_{k1}\ldots g_{k\nu}$ and $\textbf{h}^k=h_{k0}h_{k1}\ldots g_{k\nu}$ and $g_{0i}$'s and $h_{0i}$'s are chosen according to Table 1.\\
Note that any element in $\mathds{D}_6$ can be uniquely written as $x^\alpha y^\beta$ for some $\alpha\in \mathds{Z}_3$ and $\beta\in \mathds{Z}_2$. Define the transversal functions as:\\
\begin{align*}
T_k(x^\alpha y^\beta)=(\textbf{g}^k)^\alpha (\textbf{h}^k)^\beta
\end{align*}
Let $\ldots u_0u_1u_2\ldots$ be the information digits where $u_i=x^{a_i}y^{b_i}$, then the output of the code is $\ldots c_0c_1c_2\ldots$ where $c_i=g_{k-\nu,\nu}^{a_{k-\nu}}h_{k-\nu,\nu}^{b_{k-\nu}}\cdots g_{k0}^{a_k}h_{k0}^{b_k}$.\\
Assume the input is fed circularly to the code (tail biting). We also add a dither to the code.\\

Here we give a summary of the resulting ensemble of codes. Each code in this ensemble has a rate of $R=\frac{k}{n}\log 6$.\\
\begin{itemize}
    \item For $i=1,\cdots,n$ and $j=1,\cdots,k$ choose $g_{ij}$ and $h_{ij}$ randomly according to Table 1. for $(i,j)\ne (i^\prime,j^\prime)$, $(g_{ij},h_{ij})$ is chosen independently from $(g_{i^\prime j^\prime},h_{i^\prime j^\prime})$.\\
    \item For $i=1,\cdots,n$, choose the dither $B_i$ uniformly randomly from $\mathds{D}_6$.
    \item Given the input sequence $u=(u_1,\cdots,u_k)$ where $u_i=x^{a_i}y^{b_i}$, $a_i\in\mathds{Z}_3$, $b_i\in\mathds{Z}_2$ for $i=1,\cdots,k$,  the output sequence is equal to $c=(c_1,\cdots,c_n)$ where
\begin{align}
\nonumber&c_1=g_{11}^{a_1}h_{11}^{b_1}g_{12}^{a_2}h_{12}^{b_2}\cdots g_{1k}^{a_k}h_{1k}^{b_k}\cdot B_1\\
\nonumber&c_2=g_{21}^{a_1}h_{21}^{b_1}g_{22}^{a_2}h_{22}^{b_2}\cdots g_{2k}^{a_k}h_{2k}^{b_k}\cdot B_2\\
\nonumber&\vdots\\
\label{eqn:encoder}&c_n=g_{n1}^{a_1}h_{n1}^{b_1}g_{n2}^{a_2}h_{n2}^{b_2}\cdots g_{nk}^{a_k}h_{nk}^{b_k}\cdot B_n
\end{align}
\end{itemize}
We denote this by $c=G(u)\cdot B$.\\

\begin{figure}[h!]\label{fig:table_gh}
  \centering
    \includegraphics[scale=1.3]{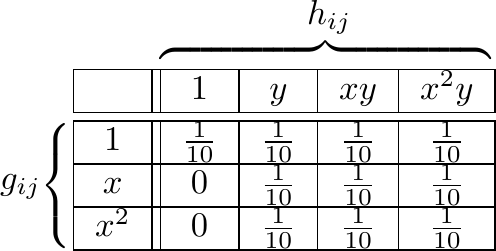}
    \caption{$g_{ij}$ is chosen from $\{1,x,x^2\}$ and $h_{ij}$ is chosen from $\{y,xy,x^2y\}$. The number in the table shows the joint probability of $(g_{ij},h_{ij})$ being picked.}
\end{figure}
We evaluate the performance of these codes using a random coding argument in the next section.
\section{Main Result}\label{section:error_analysis}
In this section we show the existence of good structured codes over the non-Abelian group $\mathds{D}_6$ by proving the following theorem:
\begin{theorem}
For the channel $(\mathds{D}_6,\mathcal{Y},W)$, let $X$ be a uniform random variable over the channel input and let the random variable $[X]$ indicate the coset of $\{1,x,x^2\}$ in $\mathds{D}_6$ where $X$ belongs to. i.e.
\begin{align*}
[X]=\left\{\begin{array}{ll}
\{1,x,x^2\}&\mbox{if }X\in \{1,x,x^2\}\\
\{y,xy,x^2y\}&\mbox{if }X\in \{y,xy,x^2y\}
\end{array}\right.
\end{align*}
Then the rate $R^*$ is achievable using pseudo-group codes over $\mathds{D}_6$ where
\begin{align*}
R^*\!=\min\left(\log_2 6 -H(X|Y),\frac{\log_2 6}{\log_2 3}\left[\log_2 3 -H(X|[X]Y)\right]\right)
\end{align*}
\end{theorem}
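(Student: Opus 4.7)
The plan is to execute a random coding argument over the ensemble of Section~\ref{section:ensemble} using a joint-typicality decoder against $X$ uniform on $\mathds{D}_6$: the decoder outputs $\hat u$ if it is the unique message with $(c(\hat u), Y^n)\in A_\epsilon^n(X,Y)$, and declares error otherwise. Because the additive dither $B$ renders $c(u)$ uniform on $\mathds{D}_6^n$ marginally, standard AEP arguments control the ``transmitted codeword not jointly typical'' event, and the bulk of the work is to bound the union term $\sum_{u'\neq u}P((c(u'),Y^n)\in A_\epsilon^n)$ averaged over the encoder randomness.

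The structural idea is to partition alternate messages by whether the error pattern $e=u'u^{-1}$ lies in the subgroup $\{1,x,x^2\}^k\subseteq \mathds{D}_6^k$. Call $u'$ a \emph{subgroup error} if $e\in\{1,x,x^2\}^k\setminus\{1\}$ (there are $3^k-1$ such), and a \emph{generic error} otherwise (there are $6^k-3^k$). For a subgroup error, the $y$-parts $b'_j=b_j$ agree coordinate-wise, and the induced codeword difference $\Delta=G(u)^{-1}G(u')$ is deterministically confined to the coset $\{1,x,x^2\}^n$; using the marginal uniformity of $g_{ij}$ implied by Table~1, one argues that $\Delta$ is uniform over this $3^n$-element set. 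For a generic error, at least one $b_j$ flips, and the more intricate joint law from Table~1 must be exploited to control the distribution of $\Delta$ within its (possibly different) coset.

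Given this partition I would apply separate pairwise typicality bounds. For subgroup errors, restricting the conditionally typical set to a specific coset pattern shrinks its size from roughly $2^{nH(X|Y)}$ to $2^{nH(X|[X]Y)}$ via the chain rule $H(X|Y)=H([X]|Y)+H(X|[X]Y)$; the per-pair probability is at most $2^{-n(\log_2 3-H(X|[X]Y)-O(\epsilon))}$, and the union over the $3^k-1$ such pairs vanishes exactly when
\begin{align*}
R<\frac{\log_2 6}{\log_2 3}\bigl(\log_2 3-H(X|[X]Y)\bigr).
\end{align*}
For generic errors I would argue that, after averaging over the encoder and the channel noise, the effective per-pair typicality probability reduces to the Shannon-level bound $2^{-n(\log_2 6-H(X|Y)-O(\epsilon))}$; the union over the $6^k-3^k$ generic pairs then vanishes when $R<\log_2 6-H(X|Y)$. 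The intersection of the two constraints is precisely $R^*$.

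The main obstacle is the generic-error bound. For any single generic $u'$, $\Delta$ is confined to a coset of $\{1,x,x^2\}^n$ of size only $3^n$, so a naive per-pair estimate would match the subgroup one rather than the Shannon one. The improvement must come from a finer analysis that uses (i) the non-product joint distribution of $(g_{ij},h_{ij})$ prescribed by Table~1, chosen precisely to exploit the non-commutativity of $\mathds{D}_6$ through the identity $hg=g^{-1}h$ for $h\in\{y,xy,x^2y\}$, $g\in\{1,x,x^2\}$; and (ii) an aggregation over the generic class in which the parities $\sum_j(b'_j-b_j)\bmod 2$ split the alternate codewords evenly between the two cosets of $\{1,x,x^2\}^n$, so that the collective distribution of $c(u')$ covers $\mathds{D}_6^n$ in a way that matches the unconditional input marginal. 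Executing this averaging cleanly, and verifying that both coset contributions combine to yield the $2^{-n(\log_2 6-H(X|Y))}$ estimate rather than the weaker $2^{-n(\log_2 3-H(X|[X]Y))}$, is the technical heart of the argument and is exactly where pseudo-group codes over the non-Abelian $\mathds{D}_6$ can outperform Abelian group codes on the same alphabet.
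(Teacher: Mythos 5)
Your overall architecture (random coding over the dithered ensemble, joint-typicality decoding, union bound organized by the error pattern $e=u'u^{-1}$, a coset-restricted typicality count giving $H(X|[X]Y)$ for competitors whose codeword difference is confined to a coset of $\{1,x,x^2\}^n$) matches the paper. But your partition of the competitors is wrong at exactly the point you flag as ``the technical heart,'' and the fix you sketch does not work. The dichotomy cannot be $m_1=0$ (subgroup errors) versus $m_1\ge 1$ (generic errors), where $m_1$ is the number of coordinates in which the $y$-part of the message flips. For $m_1=1$ the coordinate-wise distribution of $\theta=c\,\tilde c^{-1}$ is far from uniform on $\mathds{D}_6$: a direct count from the ensemble of Figure~1 shows each of the three elements of the reflection coset is hit with probability $3/10$ and each element of $\{1,x,x^2\}$ with probability about $1/30$, so the per-pair confusion probability is of order $3^{-n}2^{nH(X|[X]Y)}$ --- the same order as for a subgroup error --- not the Shannon-level $6^{-n}2^{nH(X|Y)}$ you claim. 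Since there are about $k\cdot 3^{k}$ competitors with $m_1=1$ alone, the union over your ``generic'' class is \emph{not} controlled by $R<\log_2 6-H(X|Y)$; it is genuinely governed by the second constraint. Your proposed rescue --- averaging over the generic class so that parities split codewords evenly between the two cosets of $\{1,x,x^2\}^n$ --- cannot help, because the obstruction is not which coset $\theta$ lands in but the $3/10$ versus $1/6$ per-coordinate concentration within a fixed small-$m_1$ subclass, and that subclass by itself already saturates the coset-level bound.

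The paper's proof resolves this by thresholding on the \emph{magnitudes} of $m_1$ and $m_2$ rather than on whether $m_1$ vanishes. It first computes the exact pairwise probability $P(G(\tilde u)B=\tilde c\,|\,G(u)B=c)$ as a function of $(m_1,m_2,n_1,n_2)$, observes that the relevant coordinate weights approach $\tfrac{1}{6}$ only as $m_1\to\infty$ (and approach uniformity over the correct coset only as $m_2\to\infty$), and splits the union bound into: (i) $m_1\ge M_1(\delta)$, where the difference is $\tfrac{1}{1-\delta}$-close to uniform on $\mathds{D}_6^n$ and the $\le 6^k$ competitors give the constraint $R<\log_2 6-H(X|Y)$; (ii) $m_1<M_1(\delta)$ and $m_2\ge M_2(\delta')$, where a coset-restricted typicality lemma (proved via a KKT optimization, another ingredient your sketch does not supply) and roughly $3^k$ competitors give $R<\tfrac{\log_2 6}{\log_2 3}\bigl[\log_2 3-H(X|[X]Y)\bigr]$; and (iii) a residual class with both $m_1$ and $m_2$ small, which contains only polynomially many competitors and vanishes at any rate. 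If you carry out your computation honestly you will be forced into essentially this three-way split; as written, your generic-error step fails.
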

The rest of this section is devoted to give a sketch of the proof of this theorem. Consider the class of pseudo-group codes over $\mathds{D}_6$ of the form (\ref{eqn:encoder}) used for the channel $(\mathds{D}_6,\mathcal{Y},W)$. The set of messages is $\mathds{D}_6$ and for each message $u\in \mathds{D}_6^k$ the encoder maps it to $c\in \mathds{D}_6^n$ where $c=G(u)\cdot B$. At the receiver, after receiving the channel output $y\in\mathcal{Y}^n$, the decoder looks for a message $\hat{u}\in\mathds{D}_6^k$ such that $\hat{c}=G(\hat{u})\cdot B$ is jointly $\epsilon$-typical with $y$ with respect to $P_XW_{Y|X}$ where $P_X$ is uniform over $\mathds{D}_6$ and $\epsilon>0$ is arbitrary. If it finds a unique such $\hat{c}$, it decodes $y$ to $\hat{u}$, otherwise it declares error.\\
The expected value of the average probability of error for this coding scheme is given by
\begin{align*}
&\mathds{E}\{P_{avg}(err)\}=\sum_{u\in \mathds{D}_6^k}\!\frac{1}{6^k}\!\sum_{c\in \mathds{D}_6^n}\!\! P(G(u)\cdot B=c)\sum_{\tilde{u}\ne u} \sum_{y\in A_\epsilon^n(Y|c)}\\
&\qquad  \sum_{\tilde{c}\in A_\epsilon^n(X|y)} \!\!\!P(G(\tilde{u})\cdot B=\tilde{c}|G(u)\cdot B=c)W(y|c)+O(\epsilon)\\
\end{align*}

We need to evaluate the conditional probability $P(G(\tilde{u})\cdot B=\tilde{c}|G(u)\cdot B=c)$ to proceed. For $u,\tilde{u}\in \mathds{D}_6^k$ and $x,\tilde{x}\in\mathds{D}_6^n$, let $u=(u_1,\cdots,u_k)$ where $u_i=x^{a_i}y^{b_i}$ for $i=1,\cdots,k$ and $\tilde{u}=(\tilde{u}_1,\cdots,\tilde{u}_k)$ where $\tilde{u}_i=x^{\tilde{a}_i}y^{\tilde{b}_i}$ for $i=1,\cdots,k$. Also let $c=(c_1,\cdots,c_n)$ and $\tilde{c}=(\tilde{c}_1,\cdots,\tilde{c}_n)$ and define $\theta=c\tilde{c}^{-1}=(\theta_1,\cdots,\theta_n)$ where $\theta_i=x^{\alpha_i}y^{\beta_i}$. Define the following:
\begin{align*}
&N_1(c,\tilde{c})=\left\{i\in [1,\cdots,n]|\beta_i=1\right\}\\
&N_2(c,\tilde{c})=\left\{i\in [1,\cdots,n]|\beta_i=0, \alpha_i\ne 0\right\}\\
&N_3(c,\tilde{c})=\left\{i\in [1,\cdots,n]|\beta_i=0, \alpha_i= 0\right\}=n-n_1-n_2\\
&M_1(u,\tilde{u})=\left\{i\in [1,\cdots,k]|b_i\ne \tilde{b}_i\right\}\\
&M_2(u,\tilde{u})=\left\{i\in [1,\cdots,k]|b_i=\tilde{b}_i, a_i\ne \tilde{a}_i\right\}\\
&M_3(u,\tilde{u})=\left\{i\in [1,\cdots,k]|b_i=\tilde{b}_i, a_i=   \tilde{a}_i\right\} \!=\!k\!-\!m_1\!-m_2
\end{align*}
Also define $n_1(c,\tilde{c})=|N_1(c,\tilde{c})|$, $n_2(c,\tilde{c})=|N_2(c,\tilde{c})|$, $n_3(c,\tilde{c})=|N_3(c,\tilde{c})|$, $m_1(u,\tilde{u})=|M_1(u,\tilde{u})|$, $m_2(u,\tilde{u})=|M_2(u,\tilde{u})|$, $m_3(u,\tilde{u})=|M_3(u,\tilde{u})|$.

\begin{lemma}
For $u,\tilde{u}\in \mathds{D}_6^k$ and $c,\tilde{c}\in\mathds{D}_6^n$, if $\tilde{u}\ne u$, then
\begin{align*}
&P\left(G(\tilde{u})\cdot B=\tilde{c}|G(u)\cdot B=c\right)\\
&=\frac{1}{10^{kn}}\left[10^{k-m_1}\cdot 3\sum_{\substack{l=1\\l\mbox{ odd}}}^{m_1}{m_1\choose l} 9^{l-1}\right]^{n_1} \cdot\\
&\left[\frac{10^{k-m_1-m_2}(10^{m_2}+2)}{3}+10^{k-m_1}\cdot 3\sum_{\substack{l=2\\l\mbox{ even}}}^{m_1}\!\!\!{m_1\choose l}9^{l-1}\right]^{n_2} \cdot\\
&\left[\frac{10^{k-m_1-m_2}(10^{m_2}-1)}{3}+10^{k-m_1}\cdot 3\sum_{\substack{l=2\\l\mbox{ even}}}^{m_1}\!\!\!{m_1\choose l}9^{l-1}\right]^{n_3}
\end{align*}
Moreover, for a fixed $u$, let $T_{m_1,m_2}(u)$ be the set of all $\tilde{u}$ with $m_1(u,\tilde{u})=m_1$, $m_2(u,\tilde{u})=m_2$, then
\begin{align*}
\left|T_{m_1,m_2}(u)\right|&={k\choose m_1,m_2,m_3} \cdot 3^{m_1}\cdot 2^{m_2}\\
&={k\choose m_1} {k-m_1\choose m_2}\cdot 3^{m_1}\cdot 2^{m_2}
\end{align*}
\end{lemma}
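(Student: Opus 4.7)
My plan is to peel off the additive dither first. Since the parameters $(g_{ij},h_{ij})$ are independent across the row index $i$ and each $B_i$ is uniform on $\mathds{D}_6$ and independent of everything else, for a single $i$ the event $G_i(u)B_i=c_i$ has probability $1/6$ and forces $B_i=G_i(u)^{-1}c_i$; the further event $G_i(\tilde u)B_i=\tilde c_i$ then reduces to the purely group-theoretic identity $G_i(u)G_i(\tilde u)^{-1}=c_i\tilde c_i^{-1}=\theta_i$. Consequently
\begin{align*}
P\bigl(G(\tilde u)B=\tilde c\,\big|\,G(u)B=c\bigr)=\prod_{i=1}^n P\bigl(G_i(u)G_i(\tilde u)^{-1}=\theta_i\bigr),
\end{align*}
and since the per-$i$ factor depends only on the type of $\theta_i$ (namely which of $N_1$, $N_2$, $N_3$ it lies in), the product collects into three blocks of sizes $n_1$, $n_2$, $n_3$, matching the bracketed form in the statement.

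Next I pass to a normal form. Writing $g_{ij}=x^{\gamma_{ij}}$, $h_{ij}=x^{\delta_{ij}}y$ with $\gamma_{ij},\delta_{ij}\in\mathds{Z}_3$ and repeatedly applying $yx=x^{-1}y$ to push all $y$'s to the right, the product collapses to $G_i(u)=x^{E(u,i)}y^{B(u)}$ with $B(u)=\sum_j b_j\pmod 2$ and $E(u,i)=\sum_j(-1)^{\sigma_j(u)}(a_j\gamma_{ij}+b_j\delta_{ij})\pmod 3$, where $\sigma_j(u)=b_1+\cdots+b_{j-1}$. A short direct computation then gives
\begin{align*}
G_i(u)G_i(\tilde u)^{-1}=\begin{cases} x^{E(u,i)+E(\tilde u,i)}\,y,& m_1\text{ odd},\\ x^{E(u,i)-E(\tilde u,i)},& m_1\text{ even},\end{cases}
\end{align*}
so the parity of $m_1$ selects whether only $N_1$ or only $N_2\cup N_3$ can contribute nontrivially, and the remaining task is to compute the distribution of the mod-$3$ quantity $E(u,i)\pm E(\tilde u,i)$.

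For this I apply Fourier inversion modulo $3$ and use the independence of the pairs $(\gamma_{ij},\delta_{ij})$ across $j$ to convert the target probability into a product of per-site character sums under the weight law from Table~1 (whose entries are integer multiples of $1/10$, accounting for the $10^k$ normalization in each coordinate). Sites $j\in M_3$ contribute a trivial weighted factor of $10$, producing the $10^{k-m_1-m_2}$ piece; sites $j\in M_2$ give a one-mode contribution whose character sum evaluates to $(10^{m_2}\pm 1)/3$ depending on whether the target exponent in $\mathds{Z}_3$ is zero (case $N_3$) or nonzero (case $N_2$); and sites $j\in M_1$ require the most care because the sign pair $(\epsilon_j,\tilde\epsilon_j)=((-1)^{\sigma_j(u)},(-1)^{\sigma_j(\tilde u)})$ depends on the number of preceding $M_1$-positions. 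Summing over the number $l$ of $M_1$-sites that contribute nontrivially introduces the binomial $\binom{m_1}{l}$, while each such site carries a factor of $9^{l-1}$ arising from the $9$-element joint support of $(\gamma,\delta)$ together with one overall normalization; the parity of $l$ then controls which coset the resulting group element lies in --- odd $l$ produces a $y$-factor (contributing to $N_1$) while even $l$ does not (contributing to $N_2$ and $N_3$). Collecting these three per-coordinate expressions and raising them to the exponents $n_1$, $n_2$, $n_3$ yields the lemma's probability formula.

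The count $|T_{m_1,m_2}(u)|$ is a direct multinomial enumeration: choose the $m_1$ positions at which $\tilde b_j\ne b_j$ in $\binom{k}{m_1}$ ways, then choose $m_2$ of the remaining $k-m_1$ positions for the $\tilde b_j=b_j$, $\tilde a_j\ne a_j$ case in $\binom{k-m_1}{m_2}$ ways; at an $M_1$-site $\tilde b_j$ is forced but $\tilde a_j\in\mathds{Z}_3$ is free ($3$ choices), at an $M_2$-site $\tilde a_j\in\mathds{Z}_3\setminus\{a_j\}$ gives $2$ choices, and at an $M_3$-site everything is determined, giving the stated product. The main obstacle throughout is the non-Abelian bookkeeping in the character-sum step: because the sign $(-1)^{\sigma_j}$ depends on the entire prefix of $b$'s, the $M_1$-sites are not independent contributors, and one must verify that after summing over $(\gamma_{ij},\delta_{ij})$ the dependence on the ordering of $M_1$-positions within the $k$ slots collapses into the single cardinality $l$, so that no residual ordering information survives in the final expression.
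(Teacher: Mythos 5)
Your overall route is the same as the paper's: condition away the dither to reduce to $P\bigl(G_i(u)G_i(\tilde u)^{-1}=\theta_i\bigr)$, factor over the $n$ output coordinates, and count per-coordinate solutions by classifying which positions of $M_1$ contribute a nontrivial $y$-factor. The character-sum treatment of the mod-$3$ exponent is a legitimate alternative to the paper's direct telescoping count, and your enumeration of $T_{m_1,m_2}(u)$ matches the paper exactly. However, there is a genuine error in your second paragraph. Your normal form $h_{ij}=x^{\delta_{ij}}y$ assumes every $h_{ij}$ lies in the coset $\{y,xy,x^2y\}$, but the ensemble's support has ten points per index pair, the tenth being $(g_{ij},h_{ij})=(1,1)$; this is exactly why the normalization is $10^{kn}$, why positions outside $M_1$ contribute $10^{k-m_1}$, and why the sums over $l$ appear at all ($l$ counts the $M_1$-positions at which $h_{ij}\neq 1$). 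Under your assumption the $y$-parity of $G_i(u)G_i(\tilde u)^{-1}$ would equal $m_1 \bmod 2$ deterministically, which is the claim you state; but that claim contradicts the formula you are proving, since for $m_1\ge 2$ both the odd-$l$ and even-$l$ sums are strictly positive, so all three bracketed factors are nonzero and the probability does not vanish for intermediate values of $n_1$. The correct statement, which the paper uses, is that the coset of $\theta_i$ is determined by the parity of the \emph{random} set $L=\{j\in M_1: h_{ij}\ne 1\}$, not of $m_1$.

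Your third paragraph partially recovers the right structure (summing over $l=|L|$ with parity of $l$ selecting the coset), but it is inconsistent with the preceding normal form, and the step you flag as the ``main obstacle'' --- that the prefix-dependent signs $(-1)^{\sigma_j}$ collapse so that only $|L|$ matters --- is left unverified. The paper closes exactly this gap with a concrete counting argument: fix $L$ of the required parity, choose the $10^{k-m_1}$ values outside $M_1$ freely, force $(g_{ij},h_{ij})=(1,1)$ on $M_1\setminus L$, choose $9^{l-1}$ pairs freely on $L\setminus\{j^*\}$ and $g_{ij^*}$ freely ($3$ choices), and observe that $h_{ij^*}$ is then uniquely determined in $\{y,xy,x^2y\}$ because the $y$-parity is already guaranteed to match. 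Two smaller points: your per-$M_2$ counts $(10^{m_2}\pm1)/3$ should be $(10^{m_2}+2)/3$ for the zero target and $(10^{m_2}-1)/3$ for a nonzero target (these must sum to $10^{m_2}$ over the three residues), and you should make explicit where the factor $3$ multiplying $\sum_l\binom{m_1}{l}9^{l-1}$ comes from (the free choice of $g_{ij^*}$), since it is not a ``normalization.''
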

\begin{proof}
First Note that
\begin{align*}
&P\left(G(\tilde{u})\cdot B=\tilde{c}|G(u)\cdot B=c\right)=\\
&P\left(G(\tilde{u})G^{-1}(u)=\theta|G(u)\cdot B=c\right)=\\
&P\left(G(\tilde{u})G^{-1}(u)=\theta\right)
\end{align*}
where the first equality follows since the multiplication operation is invertible and the second equality follows since $B$ is uniform and independent from other variables. We have
\begin{align*}
&P\left(G(\tilde{u})G^{-1}(u)=\theta\right)=\\
&\prod_{i=1}^n P\left(g_{i1}^{a_1}h_{i1}^{b_1}g_{i2}^{a_2}h_{i2}^{b_2}\cdots g_{ik}^{a_k}h_{ik}^{b_k}h_{ik}^{\tilde{b}_k}g_{ik}^{-\tilde{a}_k}\cdots h_{i1}^{\tilde{b}_1}g_{i1}^{-\tilde{a}_1}=\theta_i\right)
\end{align*}
Hence, we first find the $i$th probability in this expression for some arbitrary $i\in 1,2,\cdots,n$. Consider the case where $\beta_i=1$. Since the ensemble has a uniform distribution, we need to count the number of $g_{ij}$'s and $h_{ij}$'s such that the equality
\begin{align}\label{eqn:one_term}
g_{i1}^{a_1}h_{i1}^{b_1}g_{i2}^{a_2}h_{i2}^{b_2}\cdots g_{ik}^{a_k}h_{ik}^{b_k}h_{ik}^{\tilde{b}_k}g_{ik}^{-\tilde{a}_k}\cdots h_{i1}^{\tilde{b}_1}g_{i1}^{-\tilde{a}_1}=x^{\alpha_i}y^{\beta_i}=x^{\alpha_i}y
\end{align}
is satisfied and divide this number to the total number of choices. Use the equality $yx=x^2y$ to argue that the power of $y$ on the left hand side of this expression adds. i.e. the power of $y$ on the left hand side is equal to the sum of the powers of $y$ terms appearing in the expression. This is equal to
\begin{align*}
\sum_{\substack{j=1\\h_{ij}\in\{y,xy,x^2y\}}}^k (b_j+\tilde{b}_j)
\end{align*}
Where the addition is done mod-$2$. This can be written as
\begin{align*}
\sum_{\substack{j\in M_1(c,\tilde{c})\\h_{ij}\in\{y,xy,x^2y\}}} (b_j+\tilde{b}_j)=\left|\{j\in M_1|h_{ij}\in\{y,xy,x^2y\}\}\right|
\end{align*}
Let $L\subseteq M_1$ be the set of indices $j\in M_1$ where $h_{i,j}\in\{y,xy,x^2y\}$. Since the power of $y$ on the right hand side is equal to one, the cardinality of $L$ must be odd. We count the number of solutions of (\ref{eqn:one_term}) as follows: Let $L\subseteq M_1$ be arbitrary with an odd cardinality. For $j\notin M_1$ choose $g_{ij}$ and $h_{ij}$ arbitrarily ($10^{k-m_1}$ choices). for $j\in M_1\backslash L$ let $g_{ij}=1$ and $h_{ij}=1$ ($1$ choice). Since the cardinality of $L$ is assumed to be odd, it should have at least one element (say $j^*$). for $j\in L\backslash\{j^*\}$ choose $g_{ij}$ from $\{1,x,x^2\}$ and $h_{ij}$ from $\{y,xy,x^2y\}$ arbitrarily ($9^{l-1}$ choices where $l=|L|$). Also choose $g_{ij^*}$ arbitrarily from $\{1,x,x^2\}$ (3 choices). After moving terms to the other side we will get the following expression
\begin{align*}
h_{ij^*}=x^{\mbox{some power}}y^{\mbox{some power}}
\end{align*}
Note that, by construction, the power of $y$ on the right hand side has to be equal to $y$ and hence $h_{ij^*}$ has a unique solution in $\{y,xy,x^2y\}$. Therefore, it turns out that the number of choices for the case $\beta_i=1$ is equal to
\begin{align*}
10^{k-m_1}\cdot 3\cdot \sum_{\substack{l=1\\l\mbox{ odd}}}^{m_1} {m_1\choose l}9^{l-1}
\end{align*}
Note that if $m_1=0$ the above expression is defined to be zero.\\
Now consider the case where $\beta_i=0$. In this case it is possible to have $m_1=0$. We will first consider this case. Since $m_1=0$, for all $j=1,\cdots,k$ we have $\tilde{b}_j=b_j$. On the other hand, since $\tilde{u}\ne u$, there must exist an index $j\in 1,\cdots,k$ such that $\tilde{a}_j\ne a_j$. It can be shown that the total number of choices in this case is equal to
\begin{align*}
&10^{k-1}3+10^{k-2}3+10^{k-3}3+\cdots+10^{k-m_2}3\\
&=\frac{10^{k-m_2}(10^{m_2}-1)}{3}
\end{align*}
if $\theta_i\ne 1$ and it is equal to
\begin{align*}
&10^{k-1}3+10^{k-2}3+10^{k-3}3+\cdots+10^{k-m_2}4\\
&=\frac{10^{k-m_2}(10^{m_2}+1)}{3}
\end{align*}
if $\theta_i= 1$.
The next and the last case is where $\beta_i=0$ and $m_1\ne 0$. In this case the argument is similar to the case where $\beta_i=1$ (and hence $m_1\ne0$). The difference here is that we need to choose a subset $L$ of $M_1$ with an even cardinality $l$. For $l>0$ the argument is similar to the case with $\beta_i=1$ and for $l=0$, the argument is similar to the case with $\beta_i=0$ and $m_1=0$. Therefore the number of choices in this case is equal to
\begin{align*}
\frac{10^{k-m_2}(10^{m_2}-1)}{3}+\sum_{\substack{l=2\\l\mbox{ even}}}^{m_1}{m_1\choose l}10^{k-m_1}9^{l-1}3
\end{align*}
if $\theta_i\ne 1$ and it is equal to
\begin{align*}
\frac{10^{k-m_2}(10^{m_2}+2)}{3}+\sum_{\substack{l=2\\l\mbox{ even}}}^{m_1}{m_1\choose l}10^{k-m_1}9^{l-1}3
\end{align*}
if $\theta_i=1$.\\
Since the total number of choices is equal to $10^{kn}$ the assertion about the conditional probability in the lemma follows. For a fixed $u\in\mathds{D}_6^k$, the number of $\tilde{u}$'s in $\mathds{D}_6^k$ such that $m_1(u,\tilde{u})=m_1$ and $m_2(u,\tilde{u})=m_2$ is calculated as follows. Fix $m_1$ positions out of $k$ positions (${k\choose m_1}$ choices) and in these positions let $\tilde{b_j}=b_j+1$ and $\tilde{a}_j$ arbitrary ($3^{m_1}$ choices). Fix $m_2$ positions among the remaining $k-m_1$ positions (${k-m_1\choose m_2}$ choices) and in these positions let $\tilde{b}_j=b_j$ and choose $\tilde{a}_j\ne a_j$ ($2^{m_2}$ choices). It follows that the total number of choices for $\tilde{u}$ is equal to
\begin{align*}
\left|T_{m_1,m_2}(u)\right|&={k\choose m_1,m_2,m_3} \cdot 3^{m_1}\cdot 2^{m_2}\\
&={k\choose m_1} {k-m_1\choose m_2}\cdot 3^{m_1}\cdot 2^{m_2}
\end{align*}
\end{proof}

Define
\begin{align*}
&A(m_1)=\sum_{\substack{l=1\\l\mbox{ odd}}}^{m_1}\left(\begin{array}{c}m_1\\l\end{array}\right)9^{l}\\
&B(m_1,m_2)=\frac{(10^{m_2}+2)}{10^{m_2}}+\sum_{\substack{l=2\\l\mbox{ even}}}^{m_1}\left(\begin{array}{c}m_1\\l\end{array}\right)9^{l}\\
&C(m_1,m_2)=\frac{(10^{m_2}-1)}{10^{m_2}}+\sum_{\substack{l=2\\l\mbox{ even}}}^{m_1}\left(\begin{array}{c}m_1\\l\end{array}\right)9^{l}\\
\end{align*}
Using the above lemma and definitions, the expected value of the average probability of error can be upper bounded by:
\begin{align*}
&\mathds{E}\{P_{avg}(err)\}\\
&\le\sum_{m_1=0}^k \sum_{m_2=0}^{k-m_1} \sum_{n_1=0}^n \sum_{n_2=0}^{n-n_1} {k\choose m_1}{k-m_1 \choose m_2}\cdot 3^{m_1} \cdot 2^{m_2}\frac{1}{10^{kn}}\cdot\\
& 10^{n(k-m_1)}\cdot \frac{1}{3^n}A(m_1)^{n_1} B(m_1,m_2)^{n-n_1-n_2} C(m_1,m_2)^{n_2}\cdot \\
&\left|\left(x\cdot \{y,xy,x^2y\}^{n_1}\!\!\times\! \{x,x^2\}^{n_2}\!\!\times\! \{1\}^{n-n_1-n_2}\right)\cap A_\epsilon^n(X|y)\right|
\end{align*}
Define
\begin{align*}
&D(n_1,n_2;c,y)=\\
&\left|\left(c\cdot \{y,xy,x^2y\}^{n_1}\!\!\times\! \{x,x^2\}^{n_2}\!\!\times\! \{1\}^{n-n_1-n_2}\right)\cap A_\epsilon^n(X|y)\right|
\end{align*}
This quantity can be upper bounded by
\begin{align*}
\left|\left(c\cdot \{y,xy,x^2y\}^{n_1}\times \{1,x,x^2\}^{n-n_1}\right)\cap A_\epsilon^n(X|y)\right|
\end{align*}
and in turn we have the following lemma:

\begin{lemma}
Let $y\in \mathcal{Y}^n$ be an arbitrary channel output sequence. For any $x\in A_\epsilon^n(X|y)$, we have
\begin{multline*}
\left|\left(c\cdot \{y,xy,x^2y\}^{n_1}\times \{1,x,x^2\}^{n-n_1}\right)\cap A_\epsilon^n(X|y)\right|\\
\le {n \choose n_1} 2^{n\left[H(X|[X]Y)+O(\epsilon)\right]}
\end{multline*}
Where the random variable $[X]$ takes value from the set of cosets of $\{1,x,x^2\}$ in $\mathds{D}_6$.
\end{lemma}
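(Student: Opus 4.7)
The plan is to decompose the count into a combinatorial factor for choosing which positions contribute to each coset, followed by a conditional-typicality bound within each fixed coset pattern.

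First, I would observe that $\{1,x,x^2\}$ is a normal subgroup of index two in $\mathds{D}_6$, so the quotient map $[\cdot]:\mathds{D}_6\to\mathds{D}_6/\{1,x,x^2\}$ is well-defined, with cosets $[1]=\{1,x,x^2\}$ and $[y]=\{y,xy,x^2y\}$. Interpreting the set $c\cdot\{y,xy,x^2y\}^{n_1}\times\{1,x,x^2\}^{n-n_1}$ as a union over all ${n\choose n_1}$ choices $S\subseteq\{1,\ldots,n\}$ of which $n_1$ positions supply the nontrivial coset, each such $S$ yields a fixed coset sequence $z^n=z^n(S,c)$ satisfying $[x_i]=c_i\cdot\{y,xy,x^2y\}$ for $i\in S$ and $[x_i]=c_i\cdot\{1,x,x^2\}$ for $i\notin S$. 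So the count splits as a sum over $S$ of the count of jointly typical sequences with the coset pattern pinned to $z^n(S,c)$.

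Second, for each fixed $S$ (hence fixed $z^n$), I would count the number of $x^n$ with $[x^n]=z^n$ that are jointly $\epsilon$-typical with $y^n$. The key observation is that $[X]$ is a deterministic function of $X$, so any pair $(x^n,y^n)\in A_\epsilon^n(X,Y)$ automatically makes the triple $(x^n,z^n,y^n)$ jointly $O(\epsilon)$-typical with respect to $P_{X,[X],Y}$. The standard conditional-typicality size bound then yields at most $2^{n[H(X|[X],Y)+O(\epsilon)]}$ choices of $x^n$ for each fixed $z^n$, and the count is zero whenever $(z^n,y^n)$ is not itself jointly typical with respect to $P_{[X],Y}$. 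Summing this uniform per-$S$ bound over the ${n\choose n_1}$ position subsets gives
\[
{n\choose n_1}\cdot 2^{n[H(X|[X],Y)+O(\epsilon)]},
\]
which is the claimed inequality.

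The main obstacle I foresee is justifying the uniformity of the per-pattern bound in the second step. One must ensure that atypical coset patterns $z^n$ contribute zero, so that the $2^{n[H(X|[X],Y)+O(\epsilon)]}$ estimate is valid for every $z^n$ that arises in the decomposition. This is exactly where the functional relationship $[X]=f(X)$ is used: a jointly typical $(x^n,y^n)$ forces $(z^n,y^n)=([x^n],y^n)$ to be near-typical, so the non-contributing patterns need no separate accounting. Once this is in place, the combinatorial factor from summing over $S$ is immediate and gives the lemma.
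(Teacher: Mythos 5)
Your argument is correct, but it follows a genuinely different route from the paper's. The paper first reduces to the unconditional statement (with $A_\epsilon^n(X)$ in place of $A_\epsilon^n(X|y)$), then recasts the cardinality as an entropy-maximization problem over joint distributions $p(g,w)$ on $\mathds{D}_6\times\mathds{D}_6$ subject to marginal constraints and the constraint $P(W\in\{1,x,x^2\})=\alpha=\tfrac{n-n_1}{n}$; it exhibits an explicit distribution satisfying the KKT conditions, computes its entropy in closed form, and bounds the resulting four-point entropy by $h(\alpha)+h(p)$ via a convexity argument, finally absorbing $2^{nh(\alpha)}$ into ${n\choose n_1}$ by Stirling and asserting that the conditional ($Y$-dependent) generalization is straightforward. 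You bypass all of that machinery: you decompose the set over the ${n\choose n_1}$ choices of which positions carry the nontrivial coset, note that each choice pins the coset sequence $z^n=[x^n]$, and invoke the standard conditional-typicality cardinality bound $2^{n[H(X|[X]Y)+O(\epsilon)]}$ per pattern, using the fact that $[X]$ is a deterministic function of $X$ so that typicality of $(x^n,y^n)$ forces typicality of $(x^n,[x^n],y^n)$ (and atypical patterns contribute zero, so the per-pattern bound is uniform). Your approach is more elementary, requires no optimization or explicit extremal distribution, and directly yields the conditional statement that the paper leaves as an omitted generalization; the paper's approach, in exchange, identifies the extremal joint type explicitly, which makes the tightness of the exponent visible. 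Both give the same bound.
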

\begin{proof}
First we prove the following:
\begin{align*}
\left|\left(c\cdot \{y,xy,x^2y\}^{n_1}\times \{1,x,x^2\}^{n-n_1}\right)\cap A_\epsilon^n(X)\right|\\
\le {n \choose n_1} 2^{n\left[H(X|[X])+O(\epsilon)\right]}
\end{align*}
The cardinality in question is related to the conditional entropy of a random variable $W$ jointly distributed with $X$ which satisfies the constraints of the following optimization problem:
\begin{align*}
&\min_{p(g,w)}-H(X,W)=\sum_{g\in\mathds{D}_6}\sum_{w\in\mathds{D}_6} p(g,w)\log p(g,w)\\
&\mbox{s.t.}\\
& \qquad \sum_{w\in\mathds{D}_6}p(g,w)=P_X(g)\\
& \qquad \sum_{w\in\mathds{D}_6}p(g\cdot w^{-1},w)=P_X(g)\\
& \qquad \sum_{g\in\mathds{D}_6}\sum_{w\in\{1,x,x^2\}}p(x,w)=\alpha\\
& \qquad \sum_{g\in\mathds{D}_6}\sum_{w\in\{y,xy,x^2y\}}p(x,w)=1-\alpha\\
\end{align*}
where $\alpha=\frac{n-n_1}{n}$ and the minimization is over all probability mass functions on $\mathds{D}_6\times \mathds{D}_6$. Let $p=P_X(\{1,x,x^2\})$. It can be shown that if $\alpha\ge |1-2p|$, the following distribution satisfies the KKT conditions for this optimization problem.
\begin{align*}
p_{XW}(x,w)=\left\{\begin{array}{ll}
\frac{2p-1+\alpha}{2p^2}p_X(x)p_X(x\cdot w)&x,w\in\{1,x,x^2\}\\
\frac{1-2p+\alpha}{2(1-p)^2}p_X(x)p_X(x\cdot w)&x\in\{y,xy,x^2y\}\\
&,w\in\{1,x,x^2\}\\
\frac{1-\alpha}{2p(1-p)}p_X(x)p_X(x\cdot w)&w\in\{y,xy,x^2y\}
\end{array}\right.
\end{align*}
The entropy of this joint pmf can be found to be equal to
\begin{align*}
H\!(\!X,\!W\!)\!&=\!2H(X)\!+\!h(\!\frac{2p-1+\alpha}{2}, \!\frac{1-2p+\alpha}{2},\!\frac{1-\alpha}{2},\!\frac{1-\alpha}{2}\!)\\
-2h(p)
\end{align*}
where $p=P_X(\{1,x,x^2\})$ and $h$ is the entropy function. Next we prove that for $\alpha\ge |1-2p|$,
\begin{align*}
h(\!\frac{2p-1+\alpha}{2}, \!\frac{1-2p+\alpha}{2},\!\frac{1-\alpha}{2},\!\frac{1-\alpha}{2}\!)\le h(\alpha)+h(p)
\end{align*}
For $\alpha=0$ this statement is trivial. Assume $\alpha\ne 0$. Note that
\begin{align*}
&h(\frac{2p-1+\alpha}{2},\frac{1-2p+\alpha}{2},\frac{1-\alpha}{2},\frac{1-\alpha}{2})=\\
&-\frac{2p-1+\alpha}{2}\log \frac{2p-1+\alpha}{2}-\frac{1-2p+\alpha}{2}\log \frac{1-2p+\alpha}{2}\\
&-(1-\alpha)\log \frac{1-\alpha}{2}\\
&=-\alpha\left[\frac{2p-1+\alpha}{2\alpha}\log \frac{2p-1+\alpha}{2\alpha}+\frac{1-2p+\alpha}{2\alpha}\right.\\
&\left.\log \frac{1-2p+\alpha}{2\alpha}+\log\alpha\right]-(1-\alpha)\log (1-\alpha)+(1-\alpha)\\
&=\alpha\left[h(\frac{2p-1+\alpha}{2\alpha})-h(\alpha)\right]-(1-\alpha)\log (1-\alpha)+(1-\alpha)\\
&=\alpha h(\frac{2p-1+\alpha}{2\alpha})+(1-\alpha)+h(\alpha)
\end{align*}
Since the function $h(\cdot)$ is convex, for two points $x_1,x_2\in[0,1]$ and a number $\alpha\in[0,1]$ we have
\begin{align*}
\alpha h(x_1)+(1-\alpha)h(x_2)\le h(\alpha x_1+(1-\alpha)x_2)
\end{align*}
Let $x_1=\frac{2p-1+\alpha}{2\alpha}$ and $x_2=\frac{1}{2}$ to get
\begin{align*}
\alpha h(\frac{2p-1+\alpha}{2\alpha})+(1-\alpha)\le h(\frac{2p-1+\alpha}{2}+\frac{1-\alpha}{2})=h(p)
\end{align*}
Therefore
\begin{align*}
h(\!\frac{2p-1+\alpha}{2}, \!\frac{1-2p+\alpha}{2},\!\frac{1-\alpha}{2},\!\frac{1-\alpha}{2}\!)\le h(\alpha)+h(p)
\end{align*}
Using the equality $H([X])=h(p)$ we get $H(W|X)\le H(X|[X])+h(\alpha)$. Finally we use Stirling's approximation to get
\begin{multline*}
\left|\left(c\cdot \{y,xy,x^2y\}^{n_1}\times \{1,x,x^2\}^{n-n_1}\right)\cap A_\epsilon^n(X)\right|\\
\le {n \choose n_1} 2^{n\left[H(X|[X])+O(\epsilon)\right]}
\end{multline*}
The generalization to the statement of the lemma is relatively straight forward and is omitted.
\end{proof}

It can be shown that for all $\delta>0$, there exists an integer $M_1(\delta)$ such that for $m_1\ge M_1(\delta)$, $A(m_1),B(m_1,m_2),C(m_1,m_2)<\frac{10^{m_1}}{2(1-\delta)}$. It can also be shown that for all $\delta^\prime>0$, there exists an integer $M_2(\delta^\prime)$ such that for $m_2\ge M_2(\delta^\prime)$, $A(m_1)<\frac{10^{m_1}-8^{m_1}}{2(1-\delta^\prime)}$ and $B(m_1,m_2),C(m_1,m_2)<\frac{10^{m_1}+8^{m_1}}{2(1-\delta^\prime)}$.\\
For arbitrary $\delta,\delta^\prime >0$, we break the expected error probability into several terms as follows:
\begin{align*}
\mathds{E}\{P_{avg}(err)\}=P_1(\delta)+\sum_{m_1=0}^{M_1(\delta)-1} P_2(m_1,\delta^\prime)+P_3(\delta,\delta^\prime)
\end{align*}
where
\begin{align*}
&P_1(\delta)\\
&=\sum_{m_1=M_1(\delta)}^k \sum_{m_2=0}^{k-m_1} \sum_{n_1=0}^n \sum_{n_2=0}^{n-n_1} {k\choose m_1} {k-m_1\choose m_2}\cdot 3^{m_1} \cdot 2^{m_2}\cdot\\
&\quad\frac{1}{10^{m_1n}}\cdot \frac{1}{3^n} A(m_1)^{n_1} B(m_1,m_2)^{n-n_1-n_2} C(m_1,m_2)^{n_2}\cdot\\
&\left|\left(c\cdot \{y,xy,x^2y\}^{n_1}\!\!\times\! \{x,x^2\}^{n_2}\!\!\times\! \{1\}^{n-n_1-n_2}\right)\cap A_\epsilon^n(X|y)\right|
\end{align*}
\begin{align*}
&P_2(m_1,\delta^\prime)=\\
&\sum_{m_2=M_2(\delta^\prime)}^{k-m_1} \sum_{n_1=0}^n \sum_{n_2=0}^{n-n_1} {k\choose m_1}{k-m_1\choose m_2} \cdot 3^{m_1} \cdot 2^{m_2}\\
& \frac{1}{10^{m_1n}}\cdot \frac{1}{3^n} A(m_1)^{n_1} B(m_1,m_2)^{n-n_1-n_2} C(m_1,m_2)^{n_2}\cdot \\
&\left|\left(c\cdot \{y,xy,x^2y\}^{n_1}\!\!\times\! \{x,x^2\}^{n_2}\!\!\times\! \{1\}^{n-n_1-n_2}\right)\cap A_\epsilon^n(X|y)\right|
\end{align*}
and $P_3(\delta,\delta^\prime)$ is defined similar to $P_1(\delta)$ except that the first summation runs from $0$ to $M_1(\delta)-1$ and the second summation runs from $0$ to $M_2(\delta^\prime)-1$.
Next we show that
\begin{align*}
P_1(\delta)&\le \exp_2\left\{-n\left[\log_2[6(1-\delta)]-\frac{k}{n}\log_2 6 -H(X|Y)\right]\right\}
\end{align*}
and
\begin{align*}
&P_2(m_1,\delta^\prime)\\
&\le\!\exp_2\! \left\{\! -\!n\!\left[\!\log_2\![3(1-\delta)]\!-\!\frac{k}{n}\log_2 3 -H(\!X\!|[\!X\!]Y\!)\!+\!O\!(\epsilon)\!\right]\!\right\}
\end{align*}
and note that $P_3(\delta,\delta^\prime)$ is independent of the rate $R$ (It can be shown that this term goes to zero as $n$ increases regardless of the value of $R$). We have
\begin{align*}
&P_1(\delta)\\
&=\sum_{m_1=M_1(\delta)}^k \sum_{m_2=0}^{k-m_1} \sum_{n_1=0}^n \sum_{n_2=0}^{n-n_1} {k\choose m_1} {k-m_1\choose m_2}\cdot 3^{m_1} \cdot 2^{m_2}\cdot\\
&\quad\frac{1}{10^{m_1n}}\cdot \frac{1}{3^n} A(m_1)^{n_1} B(m_1,m_2)^{n-n_1-n_2} C(m_1,m_2)^{n_2}\cdot\\
&\left|\left(c\cdot \{y,xy,x^2y\}^{n_1}\!\!\times\! \{x,x^2\}^{n_2}\!\!\times\! \{1\}^{n-n_1-n_2}\right)\cap A_\epsilon^n(X|y)\right|\\
&\le\sum_{m_1=M_1(\delta)}^k \sum_{m_2=0}^{k-m_1} \sum_{n_1=0}^n \sum_{n_2=0}^{n-n_1} {k\choose m_1} {k-m_1\choose m_2}\cdot 3^{m_1} \cdot 2^{m_2}\cdot\\
&\quad\frac{1}{10^{m_1n}}\cdot \frac{1}{3^n} (\frac{10^{m_1}}{2(1-\delta)})^n\cdot\\
&\left|\left(c\cdot \{y,xy,x^2y\}^{n_1}\!\!\times\! \{x,x^2\}^{n_2}\!\!\times\! \{1\}^{n-n_1-n_2}\right)\cap A_\epsilon^n(X|y)\right|\\
&\le\sum_{m_1=M_1(\delta)}^k \sum_{m_2=0}^{k-m_1}  {k\choose m_1} {k-m_1\choose m_2}\cdot 3^{m_1} \cdot 2^{m_2}\cdot (\frac{1}{6(1-\delta)})^n \cdot\\&
\sum_{n_1=0}^n \sum_{n_2=0}^{n-n_1}\\
&\left|\left(c\cdot \{y,xy,x^2y\}^{n_1}\!\!\times\! \{x,x^2\}^{n_2}\!\!\times\! \{1\}^{n-n_1-n_2}\right)\cap A_\epsilon^n(X|y)\right|
\end{align*}
Note that the result of the last two summations is simply equal to
\begin{align*}
\left|A_\epsilon^n(X|y)\right|=2^{n\left[H(X|Y)+O(\epsilon)\right]}
\end{align*}
Therefore,
\begin{align*}
&P_1(\delta)\le\sum_{m_1=M_1(\delta)}^k \sum_{m_2=0}^{k-m_1}  {k\choose m_1} {k-m_1\choose m_2}\cdot 3^{m_1} \cdot 2^{m_2}\cdot\\
&(\frac{1}{6(1-\delta)})^n \cdot 2^{n\left[H(X|Y)+O(\epsilon)\right]}\\
&\le (\frac{1}{6(1-\delta)})^n \cdot 2^{n\left[H(X|Y)+O(\epsilon)\right]}\\
&\sum_{m_1=0}^k {k\choose m_1} 3^{m_1} \sum_{m_2=0}^{k-m_1} {k-m_1\choose m_2} \cdot 2^{m_2}
\end{align*}
Note that the result of the last summation is equal to $3^{k-m_1}$ and hence the result of the last two summations is equal to $6^k$. Therefore,
\begin{align*}
P_1(\delta)&\le \exp_2\left\{-n\left[\log_2[6(1-\delta)]-\frac{k}{n}\log_2 6 -H(X|Y)\right]\right\}
\end{align*}
For a fixed $m_1$, consider
\begin{align*}
&P_2(m_1,\delta^\prime)=\\
&\sum_{m_2=M_2(\delta^\prime)}^{k-m_1} \sum_{n_1=0}^n \sum_{n_2=0}^{n-n_1} {k\choose m_1}{k-m_1\choose m_2} \cdot 3^{m_1} \cdot 2^{m_2}\\
& \frac{1}{10^{m_1n}}\cdot \frac{1}{3^n} A(m_1)^{n_1} B(m_1,m_2)^{n-n_1-n_2} C(m_1,m_2)^{n_2}\cdot \\
&\left|\left(c\cdot \{y,xy,x^2y\}^{n_1}\!\!\times\! \{x,x^2\}^{n_2}\!\!\times\! \{1\}^{n-n_1-n_2}\right)\cap A_\epsilon^n(X|y)\right|\\
&\le\sum_{m_2=M_2(\delta^\prime)}^{k-m_1} \sum_{n_1=0}^n \sum_{n_2=0}^{n-n_1} {k\choose m_1}{k-m_1\choose m_2} \cdot 3^{m_1} \cdot 2^{m_2}\\
& \frac{1}{10^{m_1n}}\cdot \frac{1}{3^n} \left[\frac{10^{m_1}-8^{m_1}}{2(1-\delta^\prime)}\right]^{n_1} \left[\frac{10^{m_1}+8^{m_1}}{2(1-\delta^\prime)}\right]^{n-n_1} \cdot \\
&\left|\left(c\cdot \{y,xy,x^2y\}^{n_1}\!\!\times\! \{x,x^2\}^{n_2}\!\!\times\! \{1\}^{n-n_1-n_2}\right)\cap A_\epsilon^n(X|y)\right|\\
&\le\sum_{m_2=M_2(\delta^\prime)}^{k-m_1} \sum_{n_1=0}^n {k\choose m_1}{k-m_1\choose m_2} \cdot 3^{m_1} \cdot 2^{m_2}\\
& \frac{1}{10^{m_1n}}\cdot \frac{1}{3^n} \left[\frac{10^{m_1}-8^{m_1}}{2(1-\delta^\prime)}\right]^{n_1} \left[\frac{10^{m_1}+8^{m_1}}{2(1-\delta^\prime)}\right]^{n-n_1} \cdot \\
&\left|\left(c\cdot \{y,xy,x^2y\}^{n_1}\!\!\times\! \{1,x,x^2\}^{n-n_1}\right)\cap A_\epsilon^n(X|y)\right|\\
&\le\sum_{m_2=M_2(\delta^\prime)}^{k-m_1} \sum_{n_1=0}^n {k\choose m_1}{k-m_1\choose m_2} \cdot 3^{m_1} \cdot 2^{m_2}\\
& \frac{1}{10^{m_1n}}\cdot \frac{1}{3^n} \left[\frac{10^{m_1}-8^{m_1}}{2(1-\delta^\prime)}\right]^{n_1} \left[\frac{10^{m_1}+8^{m_1}}{2(1-\delta^\prime)}\right]^{n-n_1} \cdot \\
&{n \choose n_1} 2^{n\left[H(X|[X]Y)+O(\epsilon)\right]}\\
&=\sum_{m_2=M_2(\delta^\prime)}^{k-m_1} {k\choose m_1}{k-m_1\choose m_2} \cdot 3^{m_1} \cdot 2^{m_2}\\
& \frac{1}{10^{m_1n}}\cdot \frac{1}{3^n} 2^{n\left[H(X|[X]Y)+O(\epsilon)\right]} \cdot \\
&\sum_{n_1=0}^n{n \choose n_1}\left[\frac{10^{m_1}-8^{m_1}}{2(1-\delta^\prime)}\right]^{n_1} \left[\frac{10^{m_1}+8^{m_1}}{2(1-\delta^\prime)}\right]^{n-n_1}\\
&=\sum_{m_2=M_2(\delta^\prime)}^{k-m_1} {k\choose m_1}{k-m_1\choose m_2} \cdot 3^{m_1} \cdot 2^{m_2}\\
& \frac{1}{10^{m_1n}}\cdot \frac{1}{3^n} 2^{n\left[H(X|[X]Y)+O(\epsilon)\right]} \cdot \\
&\left[\frac{10^{m_1}-8^{m_1}}{2(1-\delta^\prime)}+\frac{10^{m_1}+8^{m_1}}{2(1-\delta^\prime)}\right]^n\\
&\le \frac{1}{(3(1-\delta^\prime))^n} 2^{n\left[H(X|[X]Y)+O(\epsilon)\right]}\\
&{k\choose m_1} \cdot 3^{m_1} \sum_{m_2=0}^{k}{k\choose m_2} \cdot 2^{m_2}\\
&= \frac{1}{(3(1-\delta^\prime))^n} 2^{n\left[H(X|[X]Y)+O(\epsilon)\right]}{k\choose m_1} 3^{m_1} 3^k\\
\end{align*}
Note that for a fixed $m_1$, ${k\choose m_1} 3^{m_1}$ is a polynomial in $k$. Hence,
\begin{align*}
&P_2(m_1,\delta^\prime)\\
&\le\!\exp_2\! \left\{\! -\!n\!\left[\!\log_2\![3(1-\delta)]\!-\!\frac{k}{n}\log_2 3 -H(\!X\!|[\!X\!]Y\!)\!+\!O\!(\epsilon)\!\right]\!\right\}
\end{align*}
We observe that if $R<\log_2[6(1-\delta)]-H(X|Y)$ then $P_1(\delta)$ goes to zero as $n$ increases and if $R<\frac{\log_2 6}{\log_2 3}\left\{\log_2\left[3(1-\delta^\prime)\right]-H(X|[X]Y)\right\}$ then $\sum_{m_1=0}^{M_1(\delta)-1} P_2(m_1,\delta^\prime)$ vanishes as the block length increases. Therefore, If both conditions are satisfied the expected value of the block error probability goes to zero. In conclusion, The rate $R$ is achievable if
\begin{align*}
\left\{\begin{array}{l}
R<\log_2[6(1-\delta)]-H(X|Y)\\
R<\frac{\log_2 6}{\log_2 3}\left\{\log_2\left[3(1-\delta^\prime)\right]-H(X|[X]Y)\right\}
\end{array}\right.
\end{align*}
Since $\delta$ and $\delta^\prime$ are arbitrary, we conclude that the rate $R^*$ is achievable where
\begin{align*}
R^*=\min\left(\log_2 6 -H(X|Y),\frac{\log_2 6}{\log_2 3}\left[\log_2 3 -H(X|[X]Y)\right]\right)
\end{align*}
\section{Comparison With Abelian Group Codes}\label{section:comparison}
The only Abelian group of size $6$ is $\mathds{Z}_6=\{0,1,\cdots,5\}$ where the group operation is addition mod-$6$. The best achievale rate using abelian group codes over $\mathds{Z}_6$ is known to be \cite{abelianp2p_ieee}
\begin{align*}
R^*=\min\left(\log_2 6 -H(X|Y),\frac{\log_2 6}{\log_2 3}\left[\log_2 3 -H(X|[X]_3Y)\right]\right.&\\
\left.{\color{white}\frac{1}{2}},\log_2 6\left[1-H(X|[X]_2Y)\right]\right)&
\end{align*}
where $[X]_3$ takes values from cosets of $\{0,2,4\}$ and $[X]_2$ takes values from cosets of $\{0,3\}$. In the following example we show that the achievable rate using the new code can be strictly larger than the rate achievable using Abelian group codes.
\subsection{An Example}
We give an example where the capacity of group codes is zero whereas the constructed code achieves a strictly positive rate. Consider the channel depicted in the figure below where $\epsilon_1=0.1$, $\epsilon_1=0.2$ and $\epsilon_1=0.15$.
\begin{figure}[h!]\label{fig:channel}%
  \centering
    \includegraphics[scale=.7]{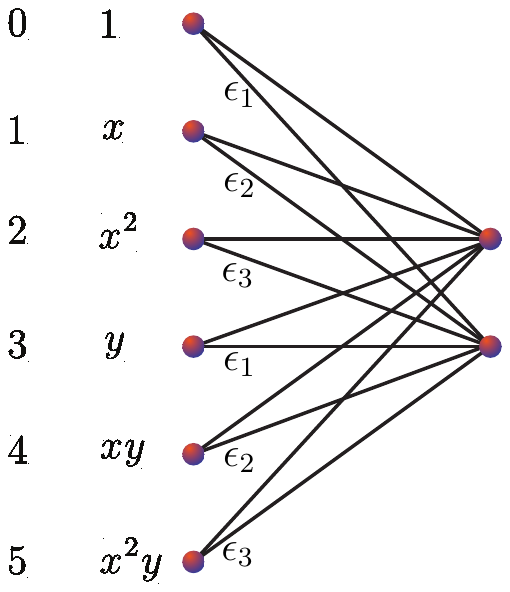}
    \caption{Channel: The first column on the left shows the input labels in $\mathds{Z}_6$ and the second column shows the labels in $\mathds{D}_6$.}
\end{figure}
If we maximize over all possible labellings of the channel input alphabet, it can be shown that both coding schemes achieve the symmetric capacity of the channel which is equal to $0.0139$ bits per channel use. However, if the labels are assumed to be fixed, the achievable rate using pseudo-group codes is equal to $R^*=\min(0.0139,0.0227)=0.0139$ and the achievable rate using Abelian group codes is equal to $R=\min(0.0139,0.0227,0)=0$.
Indeed using the converse provided in \cite{abelianp2p_ieee} we can show that the capacity of Abelian group codes over this channel is equal to zero. We observe that for this channel, the codes over $\mathds{D}_6$ outperforms the code over $\mathds{Z}_6$.
\subsection{Comparison}
If we compare the two achievable rate regions, we observe that for the case of abelian group codes there is an additional term in the minimization which can be explained by the additional structure of the abelian group codes. Indeed, the pseudo-group code over $\mathds{D}_6$ is additive (homomorphic) with respect to the $y$ generator and is not homomorphic with respect to the $x$ generator whereas Abelian group codes are homomorphic with respect to both of their generators. This means compared to Abelian Group codes, the constructed codes gain a higher rate by reducing the structure.
%
\section{Conclusion}\label{section:Conclusion}
 We have shown that good structured codes over non-Abelian groups do exist. We constructed codes over the smallest non-Abelian group $\mathds{D}_6$ and showed that the performance of these codes is superior to the performance of Abelian group codes of the same alphabet size.

\ifCLASSOPTIONcaptionsoff
  \newpage
\fi

\bibliographystyle{IEEEtran}
\bibliography{IEEEabrv,ariabib}

\begin{thebibliography}{10}
\providecommand{\url}[1]{#1}
\csname url@samestyle\endcsname
\providecommand{\newblock}{\relax}
\providecommand{\bibinfo}[2]{#2}
\providecommand{\BIBentrySTDinterwordspacing}{\spaceskip=0pt\relax}
\providecommand{\BIBentryALTinterwordstretchfactor}{4}
\providecommand{\BIBentryALTinterwordspacing}{\spaceskip=\fontdimen2\font plus
\BIBentryALTinterwordstretchfactor\fontdimen3\font minus
  \fontdimen4\font\relax}
\providecommand{\BIBforeignlanguage}[2]{{%
\expandafter\ifx\csname l@#1\endcsname\relax
\typeout{** WARNING: IEEEtran.bst: No hyphenation pattern has been}%
\typeout{** loaded for the language `#1'. Using the pattern for}%
\typeout{** the default language instead.}%
\else
\language=\csname l@#1\endcsname
\fi
#2}}
\providecommand{\BIBdecl}{\relax}
\BIBdecl

\bibitem{ahlswede_alg_codes}
R.~Ahlswede and J.~Gemma, ``{B}ounds on algebraic code capacities for noisy
  channels {I},'' \emph{Information and Control}, vol.~19, no.~2, pp. 124--145,
  1971.

\bibitem{sandeep_discus}
S.~S. Pradhan and K.~Ramchandran, ``{D}istributed source coding using syndromes
  ({DISCUS}): Design and construction,'' \emph{{IEEE} Transactions on
  Information Theory}, vol.~49, no.~3, pp. 626--643, 2003.

\bibitem{forney_dynamics}
G.~D.~F. Jr and M.~Trott, ``{T}he dynamics of group codes: State spaces,
  trellis diagrams, and canonical encoders,'' \emph{{IEEE} Transactions on
  Information Theory}, vol.~39, no.~9, pp. 1491--1513, 1993.

\bibitem{fagnani_abelian}
G.~Como and F.~Fagnani, ``{T}he capacity of finite abelian group codes over
  symmetric memoryless channels,'' \emph{{IEEE} Transactions on Information
  Theory}, vol.~55, no.~5, pp. 2037--2054, 2009.

\bibitem{dinesh_dsc}
D.~Krithivasan and S.~S. Pradhan, ``{D}istributed source coding using abelian
  group codes,'' 2011, {IEEE} Transactions on Information Theory(57)1495-1519.

\bibitem{dobrushin_group}
R.~L. Dobrushin, ``{A}symptotic optimality of group and systematic codes for
  some channels,'' \emph{Theor. Probab. Appl.}, vol.~8, pp. 47--59, 1963.

\bibitem{elias}
P.~Elias, ````{C}oding for noisy channels'','' \emph{{IRE} {C}onv. {R}ecord},
  vol. part. 4, pp. 37--46, 1955.

\bibitem{korner_marton}
J.~Korner and K.~Marton, ``{H}ow to encode the modulo-two sum of binary
  sources,'' \emph{{IEEE} Transactions on Information Theory}, vol. IT-25, pp.
  219--221, Mar. 1979.

\bibitem{phiosof_zamir}
T.~Philosof, A.~Kishty, U.~Erez, and R.~Zamir, ``{L}attice strategies for the
  dirty multiple access channel,'' \emph{{P}roceedings of {IEEE} International
  Symposium on Information Theory}, July 2007, nice, France.

\bibitem{nazer_gastpar}
B.~A. Nazer and M.~Gastpar, ``{C}omputation over multiple-access channels,''
  \emph{{IEEE} Transactions on Information Theory}, vol.~53, no. 10 pages =,
  Oct. 2007.

\bibitem{slepian_group}
D.~Slepian, ``{G}roup codes for for the {G}aussian channel,'' \emph{{B}ell
  {S}yst. {T}ech. {J}ournal}, 1968.

\bibitem{ahlswede_group}
R.~Ahlswede, ``{G}roup codes do not achieve {S}hannons's channel capacity for
  general discrete channels,'' \emph{The annals of Mathematical Statistics},
  vol.~42, no.~1, pp. 224--240, Feb. 1971.

\bibitem{ahlswede_alg_codes2}
R.~Ahlswede and J.~Gemma, ``{B}ounds on algebraic code capacities for noisy
  channels {II},'' \emph{Information and Control}, vol.~19, no.~2, pp.
  146--158, 1971.

\bibitem{abelianp2p_ieee}
A.~G. Sahebi and S.~S. Pradhan, ``{O}n the {C}apacity of {A}belian {G}roup
  {C}odes {O}ver {D}iscrete {M}emoryless {C}hannels,'' July 2011, petersburg,
  Russia.

\bibitem{Forney_hamming_distance_group_codes}
D.~Forney, ``{O}n the {H}amming {D}istance {P}roperties of {G}roup {C}odes.''

\bibitem{Interlando_group_codes_are_bad}
J.~Interlando, R.~Palazzo, and M.~Elia, ``{G}roup block codes over nonabelian
  groups are asymptotically bad,'' \emph{{IEEE} {T}ransactions on {I}nformation
  {T}heory}, vol.~42, pp. 1277--1280, 1996.

\bibitem{Massey_nonabelian_conformant}
P.~Massey, ``{M}any {N}on-{A}belian {G}roups {S}upport {O}nly {G}roup {C}odes
  {T}hat {A}re {C}onformant {T}o {A}belian {G}roup {C}odes.''

\bibitem{algebra_bloch}
N.~J. Bloch, \emph{{A}bstract Algebra With Applications}.\hskip 1em plus 0.5em
  minus 0.4em\relax Englewood Cliffs, New Jersey: Prentice-Hall, Inc, 1987.

\end{thebibliography}
\end{document}